\documentclass[conference]{IEEEtran}
\IEEEoverridecommandlockouts
\usepackage{cite}
\usepackage{amsmath,amsthm,amssymb,amsfonts}
\usepackage{algorithmic}
\usepackage{graphicx}
\usepackage{textcomp}
\usepackage{xcolor}
\usepackage[margin=0.7in]{geometry}
\usepackage[caption=false,font=footnotesize]{subfig}
\DeclareMathOperator{\Tr}{Tr}
\graphicspath{{figures/drawings/}{figures/pdf/}}
\begin{document}

\title{UAV Swarms as Amplify-and-Forward MIMO Relays\\
\thanks{This work was supported in part by the CONIX Research Center, one of six centers in JUMP, a Semiconductor Research Corporation (SRC) program sponsored by DARPA.}
}

\author{\IEEEauthorblockN{Samer Hanna, Enes Krijestorac, Han Yan, and Danijela Cabric}
\IEEEauthorblockA{\textit{Electrical and Computer Engineering Department,} \\ \textit{University of California, Los Angeles} \\ 
		samerhanna@ucla.edu, enesk@g.ucla.edu, yhaddint@ucla.edu, danijela@ee.ucla.edu 
}
}

\maketitle

\begin{abstract}
Unmanned aerial vehicles provide new opportunities for performance improvements in future  wireless communications systems. For example, they can act as relays that extend the range of  a communication link and improve the capacity. 
Unlike conventional relays that are deployed at fixed locations, UAVs can change their positions to optimize the capacity or range on demand.  In this paper,  we consider using a swarm of UAVs as amplify-and-forward MIMO relays to provide connectivity between an obstructed multi-antenna equipped source and destination. 
We start by optimizing UAV placement for the single antenna case, and analyze its dependence on the noise introduced by the relay, its gain, and transmit power constraint.  We extend our analysis for an arbitrary UAV swarm and show how the MIMO link capacity can be optimized by changing the distance of the swarm to the source and the destination.
Then, we consider the effect of optimizing the positions of the UAVs within the swarm and derive an upper bound for the capacity at  any given placement of the swarm. 
We also propose a simple near optimal approach to find the positions that optimize the capacity for the end-to-end link given that the source and the destination have uniform rectangular arrays.
\end{abstract}

\begin{IEEEkeywords}
Unmanned aerial vehicle (UAV), amplify and forward relay, MIMO capacity. 
\end{IEEEkeywords}

\section{Introduction}
Due to their mobility and low cost, unmanned aerial vehicles (UAVs) have found their way to many applications in recent years. Examples  include package delivery, law enforcement, search and rescue, etc. Driven by this demand, UAVs are expected to become more prevalent, which will further drive the development of this technology and demand regulations that will allow for higher presence of UAVs in the low-altitude air space.
Following this trend, UAVs are getting an increased attention in the telecommunications sector due to the multitude of opportunities they can provide  \cite{zeng_opportunities_2016}. Using UAVs has recently emerged as an idea to respond to high localized traffic demands in next generation cellular networks \cite{li2017uav, wu2018uav, mozaffari2016efficient}. Beyond using UAVs as basestations, UAVs can be used as relays to extend the range of communication, boost capacity, or as a substitute for failed infrastructure in case of disasters.

Wireless relaying is one of the classical ways to improve data rates, while increasing reliability by combating shadowing. By using more than one relay along with multiple antennas at the source and the destination,  multiple-input multiple-output (MIMO) relay networks are able to boost the capacity ~\cite{capacity_scaling_2006}.
Traditionally, relaying approaches relied on using fixed relays. However, fixed relays, typically deployed on the ground, are unable to meet fluctuating demands or respond to failures in communications infrastructure.  Deploying  relays on UAVs provides new opportunity to exploit agility of motion of UAVs and the capacity improvements offered by wireless relays.
  
  UAVs typically fly at altitudes of a few hundred feet above the ground, which could provide a line-of-sight (LOS) channel between an obstructed source and destination. By using multiple UAVs along with multiple antennas at the source and the destination, MIMO capacity gains can be leveraged. The MIMO capacity of this end-to-end link depends on the placement of UAVs, as they directly affect both the source-to-UAV and the UAV-to-destination channels. Hence, by optimizing the placement of individual UAVs within the swarm,  capacity can be improved.

There is a significant interest in using UAVs as relays in recent literature. In~\cite{chen_optimum_2018}, the authors considered the optimal placement of a UAV relay that minimizes the outage probability. Other works have addressed the problem of finding the optimal trajectory and transmit power of a mobile relay~\cite{zeng_throughput_2016, jiang_power_2018}. In~\cite{larsen_optimal_2017}, the authors show that positioning a UAV asymmetrically between two ground nodes could result in better service than with a UAV placed at the center position when using stepwise adaptive modulation. 
All these works have only considered a single UAV relay and do not apply to MIMO links.
 In~\cite{hanna2018distributed},  algorithms for optimizing the placement of a UAV swarm were developed, but here only a single-hop link was considered.   The placement of multiple UAVs as relays in double and multiple hop relay networks was optimized in \cite{chen_multiple_2018}. However, this work  assumes that only a single UAV is transmitting at a time, thus full MIMO gains are not leveraged. 
 In~\cite{kalogerias2018spatially}, the placement of multiple UAV relays in a dynamic channel is analyzed but the source and destination are assumed to have only single antennas.

In this work, we study the effect of changing the positions and the arrangement of a UAV swarm acting as amplify-and-forward relay cluster between a multiple antenna transmitter and receiver under an obstructed direct link. We also propose a method for optimizing the channel capacity by controlling the placement of the UAV relays. We start by considering the single antenna case and show how the UAV relay design parameters such as amplification gain and noise figure affect its position for maximized capacity. We derive an upper bound for the achievable capacity in case of a UAV swarm, and show that the single antenna analysis can be extended to the UAV swarm case. For the swarm, we demonstrate the gains of optimizing positions of the UAVs within the swarm and we propose a simple approach to find the positions that can attain the upper bound for a transmitter and receiver consisting of a uniform rectangular array for some separations between the swarm and the transmitter, while giving a better capacity than random placement on the average for all separations.

The rest of the paper is organized as follows. We start by defining the system model used throughout this work in Section \ref{sec:system_model}. An analysis for the capacity that can be obtained by a single UAV or a UAV swarm is presented in Section \ref{sec:cap}, while  Section \ref{sec:method} proposes a method to find this placement for URA source and destination. Simulation results analyzing impact of the UAV swarm placement and configuration are presented in Section \ref{sec:simulations}. Section \ref{sec:conclusion} concludes the paper and presents directions for future research.

\section{System Model} \label{sec:system_model}
\renewcommand{\b}[1]{\boldsymbol{\mathrm{#1}}}
\newcommand{\br}[1]{\bar{#1}}

\newcommand{\C}[1]{\mathbb{C}^{#1}}
\newcommand{\mSymbTransmitter}{\mathrm{T}}
\newcommand{\mSymbUAV}{\mathrm{U}}
\newcommand{\mSymbReceiver}{\mathrm{R}}
\newcommand{\mNt}{N_\mSymbTransmitter}
\newcommand{\mPt}{P_\mSymbTransmitter}
\newcommand{\mNr}{N_\mSymbReceiver}

\newcommand{\mNu}{N_\mSymbUAV}
\newcommand{\mPu}{P_\mSymbUAV}

\newcommand{\mH}{\b{H}}
\newcommand{\mHI}{\b{H}_1}
\newcommand{\mHII}{\b{H}_2}
\newcommand{\mHbI}{\br{\b{H}}_1}
\newcommand{\mHbII}{\br{\b{H}}_2}
\newcommand{\mHt}{\Tilde{\b{H}}}

\newcommand{\mFb}[1]{ \|#1\|_F }
\newcommand{\mlHt}[1]{\hat{\psi}_{#1}}
\newcommand{\mlHI}{\psi_{1}}
\newcommand{\mlHII}{\psi_{2}}

\newcommand{\mh}{h}
\newcommand{\mhI}{h_1}
\newcommand{\mhII}{h_2}

\newcommand{\ms}{\b{s}}
\newcommand{\mt}[1]{\b{x}_{#1}}
\newcommand{\mr}[1]{\b{y}_{#1}}
\newcommand{\mn}[1]{\b{n}_{#1}}
\newcommand{\mnVar}[1]{\b{\Sigma}_{#1}}
\newcommand{\mnvar}[1]{\sigma^2_{#1}}
\newcommand{\mNF}{f_{\mSymbUAV}}

\newcommand{\msSc}{s}
\newcommand{\mtSc}[1]{x_{#1}}
\newcommand{\mrSc}[1]{y_{#1}}
\newcommand{\mnSc}[1]{n_{#1}}

\newcommand{\mD}{\b{D}}
\newcommand{\mAlpha}[1]{\alpha_{#1}}

\newcommand{\mPMax}[1]{P^{\max}_{#1}}
\newcommand{\mAlphaMax}[1]{\alpha^{\max}_{#1}}

\newcommand{\mDist}{R}
\newcommand{\mDistI}{R_1}

\newcommand{\mf}{f}
\newcommand{\mg}{g}

\begin{figure}[t!]
	\centering
	\includegraphics[width=0.45\textwidth]{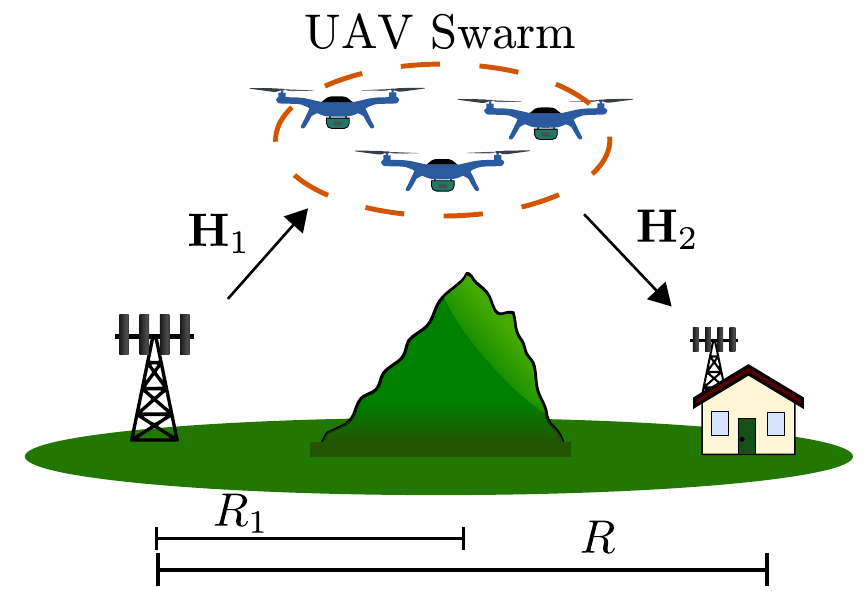}
	\caption{UAV swarm assisting obstructed MIMO link.}
	\label{fig:setup}
\end{figure}

We consider a link between a transmitter with $ \mNt $ antennas and a receiver with $ \mNr $ antennas separated by a distance $ \mDist $. We assume that direct communication between the transmitter and the receiver is not possible due to obstructions. We use $\mNu$ UAVs placed at distance $\mDistI$ from the transmitter, each equipped with a single-antenna, to enable this link and maximize its capacity. Each UAV acts as an amplify-and-forward  relay which simply receives the signal, amplifies, and re-transmits it in a synchronized manner.

The channel between the transmitter and the UAVs is denoted by $ \mHI \in \C{\mNu \times \mNt} $, while the channel between the UAVs and the receiver is denoted by $ \mHII \in \C{\mNr \times \mNu} $. We assume that both channels are strong line-of-sight (LOS) channels where each element is given by
$
	\left[\mH\right]_{i,j} = \frac{\lambda}{4 \pi d_{i,j}}  e^{\frac{j 2 \pi d_{i,j}}{\lambda}} 
$
and $ \lambda $ is the wavelength of the signal and $ d_{i,j} $ is the distance between antenna $ i $ and $ j $ at transmitter, relay or receiver.

The signal sent by the transmitter is given by
$
	\mt{\mSymbTransmitter} = \mAlpha{\mSymbTransmitter}\ms
$
where $ \ms $ are the transmitted symbols having an identity covariance matrix, $\mAlpha{\mSymbTransmitter} = \sqrt{\mPt}$, and $\mPt$ is the transmitted power. The signal received at the UAVs is given by
\begin{equation}
	\mr{\mSymbUAV} = \mHI \mt{\mSymbTransmitter} + \mn{\mSymbUAV}
\end{equation}
where $\mn{\mSymbUAV}\in\C{\mNu}$ is  additive white Gaussian noise with covariance $ \mnvar{\mSymbUAV} \b{I}$.  
The signal received by the UAVs is amplified and transmitted as 
$\mt{\mSymbUAV} = \mD \mr{\mSymbUAV} $,
where $ \mD $ is the amplification matrix which is defined as $ \mD = \text{diag}\{ \mAlpha{1},\cdots,\mAlpha{\mNu} \} $ where $ \mAlpha{i} $ is the gain of the $ i $th UAV. The signal received at the destination is
\begin{equation}
	\mr{\mSymbReceiver}=  \mAlpha{\mSymbTransmitter} \mHII \mD \mHI \ms + \mHII \mD  \mn{\mSymbUAV} + \mn{\mSymbReceiver}
\end{equation}
where $  \mn{\mSymbReceiver} $ is the additive white Gaussian noise with variance $ \mnvar{\mSymbReceiver} $. 

We assume that $ \mnvar{\mSymbReceiver} = \mnvar{} $ where $ \mnvar{} $ is the thermal noise and $ \mnvar{\mSymbUAV} = \mNF \mnvar{} $ where $ \mNF $ is the noise figure of the relay (the ratio between the input and output SNR). The noise covariance matrix of the end-to-end channel is $\mnVar{} = ( \mnvar{\mSymbUAV}  \mHII \mD  \mD^H \mHII^H + \mnvar{\mSymbReceiver}\b{I})$. The theoretical MIMO capacity of this end-to-end link is given by 
\begin{equation} \label{eq:capacity_general}
	C = \log_2(\det(\b{I} +  \mAlpha{\mSymbTransmitter}^2     \mHI^H  \mD^H \mHII^H   \mnVar{}^{-1} \mHII \mD \mHI  ))
\end{equation}

We consider two practical constraints on the relay amplifier. First, it has a maximum transmit power that it cannot exceed given by $ \mPMax{\mSymbUAV} $. Second, it has a maximum value of amplification gain given by $ \mAlphaMax{\mSymbUAV} $. This translates to the following constraints:	$\mAlpha{i}^2  |\left[\mr{\mSymbUAV} \right]_i|^2 \leq \mPMax{\mSymbUAV}$  and 
$ \mAlpha{i} \leq \mAlphaMax{\mSymbUAV} $.

The problem of interest is to find the placement of the UAVs which will maximize the capacity given by (\ref{eq:capacity_general}) while realizing the power and gain constraints. 
Optimizing power allocations for MIMO relays has been studied extensively in the literature (see \cite{relay_tutorial_2012} and the references within), so in this work we focus on UAV placement and use simple strategies to set the gains of the UAVs. Namely, we assume that all UAVs have the same gain $\mD = \mAlpha{U} \b{I}$,  which is of practical interest for simplified radio design in UAV relays and the reduced overhead of power allocation and control.


\section{Capacity Analysis}\label{sec:cap}
\subsection{Single UAV relay} \label{subsec:one_uav}
We will start with the simple case of $\mNt=\mNr=\mNu=1$ in order to gain insights about optimal placements of an UAV relay based on gain and power constraints. 
We can identify two regions, the first where the UAV is closer to the transmitter and is limited by the maximum power it can transmit and the second where it is limited by the maximum gain.
For the region of maximum gain, the expression for the capacity simplifies to
\begin{equation} \label{eq:single_uav}
C = \log_2 \left(1 +    \frac{\mAlpha{\mSymbTransmitter}^2 \mAlpha{\mSymbUAV}^2   \lambda^4        }{ \mnvar{} \mDistI^2 (\mNF \lambda^2 \mAlpha{\mSymbUAV}^2 (4\pi)^2   +  (4 \pi)^4  (\mDist-\mDistI)^2)} \right)   
\end{equation}
Solving for the maximas we get\footnote{Setting $ \mDistI $ to be equal to zero, while maximizes the expression, is not practical and violates the underlying assumptions of the LOS model.}
\begin{equation}\label{eq:roots_const_gain}
	\mDistI  = (1/4) ({3 \mDist+ \sqrt{\mDist^2-\frac{\lambda^2 \mAlpha{\mSymbUAV}^2 \mNF}{2\pi^2}}})
\end{equation}

As for the second solution $ \mDistI^{(2)} $, its existence in the feasible region $[0,\mDist]$ requires that relay gain $ \mAlpha{\mSymbUAV} $ and noise figure $ \mNF $ have low values. If these constants have a high value, i.e, the noise  figure is too high or the amplification (which also amplifies the noise) is too high, there will be no feasible second root  and the optimal capacity will be to move the relay closer to the transmitter.


Under maximum power constraint, we set $\mAlpha{\mSymbUAV}^2 = \frac{\mPMax{\mSymbUAV}}{\mAlpha{\mSymbTransmitter}^2|\mhI|^2}$, and capacity becomes 
\begin{equation}
C = \log_2 \left( 1 + \frac{ \mPMax{\mSymbUAV}\mAlpha{\mSymbTransmitter}^2 \lambda^2}{ (4\pi)^2 (\mPMax{\mSymbUAV} \mNF \mnvar{}  \mDistI^2 + \mAlpha{\mSymbTransmitter}^2  \mnvar{}  (\mDist - \mDistI)^2) }  \right)
\end{equation}
The maximum capacity is achieved at
\begin{equation}\label{eq:roots_max_power}
	\mDistI = {\mAlpha{\mSymbTransmitter}^2 \mDist}/({\mPMax{\mSymbUAV} \mNF + \mAlpha{\mSymbTransmitter}^2 })
\end{equation}
We can see from this expression that the optimal $\mDistI$ that maximizes capacity gets closer to the transmitter as we increase the noise figure of the UAVs or increase its maximum transmit power (which would lead to noise amplification). 
 \vspace{-0.5mm}
\subsection{UAV Swarm Relay}\label{subsec:multiple_uav}
\newtheorem{theorem}{Theorem}
In this section, we derive an upper bound for the capacity of the UAV swarm MIMO relay, then we propose a method to achieve that capacity. Assuming all UAVs have the same gain, the capacity becomes
\begin{equation}
C = \log_2(\det(\b{I} +    \mAlpha{\mSymbTransmitter}^2  \mAlpha{\mSymbUAV}^2 
 \mHI^H   \mHII^H  ( \mnVar{} )^{-1} \mHII  \mHI  ))
\end{equation}
where $ \mnVar{} = \mnvar{} \mHII  \mHII^H \mNF + \mnvar{}  \b{I}$.
Finding the position of UAV swarm and placement of UAVs within the swarm that maximize this capacity is a challenging problem in general. The magnitude and phase of each element of both $\mHI$ and $\mHII$ depend on distance between transmitter, UAV and receiver antennas.  We start our analysis by observing that this capacity is equivalent to the capacity of the channel $ \mHt = (   \mHII  \mHII^H \mNF + \b{I})^{-\frac{1}{2}} \mHII  \mHI $.  In the following, we derive the conditions on $\mHI$ and $\mHII$ that maximize the capacity.

\begin{theorem}
An upper bound on the capacity of the channel $\mHt$ is defined as $C\leq K\log(1+\frac{\mAlpha{\mSymbTransmitter}^2  \mAlpha{\mSymbUAV}^2 }{\mnvar{}K} \Tr(\mHt \mHt^{H}))$, where $K= \min(\mNt,\ \mNr,\ \mNu)$, and it is reached when $\mHt$ has orthogonal columns. 
\label{theorem:ubound1}
\end{theorem}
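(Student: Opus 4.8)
The plan is to diagonalize the capacity expression and then exploit concavity of the logarithm. First I would carry out the reduction to $\mHt$ indicated in the text. Because the whitening factor $(\mHII\mHII^H\mNF+\b{I})^{-1/2}$ is Hermitian and $\mnVar{}=\mnvar{}(\mHII\mHII^H\mNF+\b{I})$, the quadratic form appearing in the capacity satisfies $\mAlpha{\mSymbTransmitter}^2\mAlpha{\mSymbUAV}^2\mHI^H\mHII^H\mnVar{}^{-1}\mHII\mHI=\tfrac{\mAlpha{\mSymbTransmitter}^2\mAlpha{\mSymbUAV}^2}{\mnvar{}}\mHt^H\mHt$, and applying the identity $\det(\b{I}+\b{A}\b{B})=\det(\b{I}+\b{B}\b{A})$ gives
\[
C=\log_2\det\!\left(\b{I}+\tfrac{\mAlpha{\mSymbTransmitter}^2\mAlpha{\mSymbUAV}^2}{\mnvar{}}\mHt\mHt^H\right).
\]

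Next I would diagonalize $\mHt\mHt^H$. Writing its eigenvalues as $\lambda_1,\dots,\lambda_K\ge 0$, the determinant factorizes and
\[
C=\sum_{i=1}^{K}\log_2\!\left(1+\tfrac{\mAlpha{\mSymbTransmitter}^2\mAlpha{\mSymbUAV}^2}{\mnvar{}}\lambda_i\right).
\]
The essential structural point is that at most $K=\min(\mNt,\mNr,\mNu)$ eigenvalues can be nonzero: $\mHt$ factors through the cascade $\mHII\mHI$, whose rank is bounded by $\min(\mNt,\mNr,\mNu)$ since $\mHI$ has rank at most $\min(\mNu,\mNt)$ and $\mHII$ at most $\min(\mNr,\mNu)$, and left-multiplication by the full-rank whitening factor preserves this rank.

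The bound itself is then an application of Jensen's inequality to the concave function $t\mapsto\log_2(1+ct)$ with $c=\mAlpha{\mSymbTransmitter}^2\mAlpha{\mSymbUAV}^2/\mnvar{}$. Averaging the $K$ summands and using $\sum_i\lambda_i=\Tr(\mHt\mHt^H)$ yields
\[
C\le K\log_2\!\left(1+\tfrac{\mAlpha{\mSymbTransmitter}^2\mAlpha{\mSymbUAV}^2}{\mnvar{}K}\Tr(\mHt\mHt^H)\right).
\]
Equality in Jensen's inequality forces all $\lambda_i$ to coincide, i.e. the nonzero singular values of $\mHt$ are equal; equivalently $\mHt^H\mHt$ is a scaled identity on its support, which is exactly the stated condition that $\mHt$ has (equal-norm) orthogonal columns.

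The step I expect to require the most care is the rank bookkeeping. A naive reading of the dimensions $\mHt\in\C{\mNr\times\mNt}$ would suggest up to $\min(\mNr,\mNt)$ active modes, so one must argue explicitly that the intermediate relay dimension $\mNu$ caps the rank of the cascade $\mHII\mHI$ and that the invertible whitening factor leaves this cap intact; this is what pins the prefactor to $K=\min(\mNt,\mNr,\mNu)$. Everything else, namely the determinant identity and the concavity argument, is standard.
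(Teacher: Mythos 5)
Your proof is correct and follows essentially the same route the paper intends: the paper's proof simply defers to the standard non-relay MIMO argument in Tse and Viswanath (p.~295), which is exactly the reduction to $\log_2\det(\b{I}+c\,\mHt\mHt^H)$, eigenvalue decomposition, and Jensen's inequality that you write out, with equality when the nonzero singular values coincide. Your explicit bookkeeping of the rank cap $K=\min(\mNt,\mNr,\mNu)$ through the relay dimension, and the whitening-factor identity linking $\mnVar{}^{-1}$ to $\mHt$, fills in details the paper leaves to the reader but does not change the approach.
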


\begin{proof}
    The proof follows from the proof for a non-relay channel. The reader is referred to \cite[p.~295]{Tse_Wireless_2005} for the details of the latter proof.
\end{proof}
 \vspace{-0.5mm}
There are many candidates for the matrices $ \mHI $ and $ \mHII $, and effectively positions of UAVs,  that can realize this condition on $ \mHt $. From the expression for $ \mHt $, we can observe that one way to achieve this is to have $ \mHI $ have orthogonal columns and $ \mHII $ have orthogonal columns. We now derive a capacity equation for the case when this condition is satisfied. 

\begin{theorem} \label{thm2}
An upper bound on the capacity of the channel $\mHt$ is 
$C \leq K\log(1+\frac{\mAlpha{\mSymbTransmitter}^2  \mAlpha{\mSymbUAV}^2 }{\mnvar{}K} \sum_{i=1}^{K} \psi_{1,i}^{2}   \frac{\psi_{2,i}^{2}}{1 + \mNF\psi_{2,i}^{2}})$, 
where ${\psi}_{1,i}$ and ${\psi}_{2,i}$ are the singular values of $\mHI$ and $\mHII$ respectively, and $K = \min(\mNt,\ \mNr,\ \mNu)$. The upper bound is achieved when $\mHI$ and $\mHII$ are orthogonal.
\end{theorem}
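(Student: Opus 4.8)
The plan is to obtain Theorem \ref{thm2} by feeding a sharp evaluation of $\Tr(\mHt\mHt^H)$ into the bound already proved in Theorem \ref{theorem:ubound1}. Since that theorem gives $C\leq K\log(1+\frac{\mAlpha{\mSymbTransmitter}^2\mAlpha{\mSymbUAV}^2}{\mnvar{}K}\Tr(\mHt\mHt^H))$ and $\log$ is increasing, it suffices to establish $\Tr(\mHt\mHt^H)\leq\sum_{i=1}^{K}\psi_{1,i}^2\frac{\psi_{2,i}^2}{1+\mNF\psi_{2,i}^2}$ with equality when $\mHI$ and $\mHII$ are orthogonal, and then chain the two inequalities.

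First I would rewrite the trace using $\mHt=(\mNF\mHII\mHII^H+\b{I})^{-1/2}\mHII\mHI$ and the cyclic property,
\begin{equation}
\Tr(\mHt\mHt^H)=\Tr\!\left((\mNF\mHII\mHII^H+\b{I})^{-1}\mHII\mHI\mHI^H\mHII^H\right).
\end{equation}
Then I would insert the singular value decompositions $\mHI=\b{U}_1\boldsymbol{\Sigma}_1\b{V}_1^H$ and $\mHII=\b{U}_2\boldsymbol{\Sigma}_2\b{V}_2^H$. Because $(\mNF\mHII\mHII^H+\b{I})^{-1}=\b{U}_2(\mNF\boldsymbol{\Sigma}_2\boldsymbol{\Sigma}_2^H+\b{I})^{-1}\b{U}_2^H$, the factors $\b{U}_2$ cancel under the trace, and the diagonal product $\boldsymbol{\Sigma}_2^H(\mNF\boldsymbol{\Sigma}_2\boldsymbol{\Sigma}_2^H+\b{I})^{-1}\boldsymbol{\Sigma}_2$ collapses to $\b{T}=\mathrm{diag}\big(\tfrac{\psi_{2,i}^2}{1+\mNF\psi_{2,i}^2}\big)$. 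This reduces the trace to $\Tr(\b{T}\,\b{Q}\,\b{D}_1\,\b{Q}^H)$, where $\b{D}_1=\mathrm{diag}(\psi_{1,i}^2)$ and $\b{Q}=\b{V}_2^H\b{U}_1$ is a unitary matrix that carries all of the dependence on the relative alignment of the two channels (the zero padding beyond the $K$ active modes is routine).

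The key step is then a rearrangement argument. Writing the trace as $\sum_{i,j}[\b{T}]_{ii}\,|[\b{Q}]_{ij}|^2\,[\b{D}_1]_{jj}$ and noting that $|[\b{Q}]_{ij}|^2$ is doubly stochastic, I would invoke Birkhoff's theorem together with the rearrangement inequality to conclude that the sum is maximized when $\b{Q}$ is a sorting permutation, yielding $\Tr(\mHt\mHt^H)\leq\sum_{i=1}^{K}\psi_{1,i}^2\frac{\psi_{2,i}^2}{1+\mNF\psi_{2,i}^2}$ with the singular values ordered consistently. I expect this alignment step to be the main obstacle, since it is precisely where the whitening factor $(\mNF\mHII\mHII^H+\b{I})^{-1/2}$ couples the source and relay channels, and where the meaning of ``$\mHI$ and $\mHII$ orthogonal'' must be pinned down: taking both to have aligned orthogonal columns forces $\b{Q}$ to a permutation, which is exactly the configuration attaining the bound.

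Finally I would verify achievability by direct substitution: when $\mHI$ and $\mHII$ are orthogonal and aligned, $\mHt$ becomes diagonal with entries $\frac{\psi_{1,i}\psi_{2,i}}{\sqrt{1+\mNF\psi_{2,i}^2}}$, so $\Tr(\mHt\mHt^H)=\sum_i\psi_{1,i}^2\frac{\psi_{2,i}^2}{1+\mNF\psi_{2,i}^2}$ exactly and $\mHt$ has orthogonal columns, meaning the orthogonality hypothesis of Theorem \ref{theorem:ubound1} is simultaneously met. Chaining this with Theorem \ref{theorem:ubound1} then gives the stated bound and shows it is tight under orthogonality. One caveat I would flag is that the outer $K\log(1+\cdot/K)$ form inherits a Jensen step from Theorem \ref{theorem:ubound1}, so exact equality there additionally requires the per-mode gains $\frac{\psi_{1,i}^2\psi_{2,i}^2}{1+\mNF\psi_{2,i}^2}$ to be equal across $i$.
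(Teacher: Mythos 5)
Your proposal follows essentially the same route as the paper's proof: chain Theorem \ref{theorem:ubound1} with the trace identity $\Tr(\mHt\mHt^{H})=\Tr((\mNF\mHII\mHII^{H}+\b{I})^{-1}\mHII\mHI\mHI^{H}\mHII^{H})$, insert the SVDs of $\mHI$ and $\mHII$, and bound the resulting mixing by the unitary matrix $\b{V}_2^H\b{U}_1$. Your Birkhoff/rearrangement argument correctly supplies the detail the paper leaves as ``it can be shown,'' and your closing caveat about the Jensen step matches the paper's observation that full equality requires the singular values within each channel to be equal, i.e., orthogonal columns.
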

\begin{proof}
	From the proof of Theorem \ref{theorem:ubound1} we know that 
	$	C \leq K\log(1+\frac{\mAlpha{\mSymbTransmitter}^2  \mAlpha{\mSymbUAV}^2 }{\mnvar{}K}\Tr(\mHt \mHt^{H}))$.
	Furthermore, we can expand the trace term as
	\begin{equation*}
	\Tr(\mHt \mHt^{H}) = \Tr((\mNF\mHII\mHII^{H}+I)^{-1}\mHII \mHI \mHI ^{H} \mHII^{H})
	\end{equation*}
	Let $\mHII = \boldsymbol{U}_{2}\boldsymbol{\Lambda}_{2}\boldsymbol{V}_{2}^{H}$ and $\mHI = \boldsymbol{U}_{1}\boldsymbol{\Lambda}_{1}\boldsymbol{V}_{1}^{H}$, by the singular value decomposition, then
	\begin{equation*}
	\Tr(\mHt \mHt^{H}) = \Tr((\mNF\boldsymbol{\Lambda}_{2}^{2}+I)^{-1}\boldsymbol{\Lambda}_{2}\mathbf{V}_{2}^{H}\boldsymbol{U}_{1}\boldsymbol{\Lambda}_{1}^{2}\mathbf{U}_{1}^{H}V_{2}\boldsymbol{\Lambda}_{2}^{T})
	\end{equation*}
	Using the fact that $\mathbf{U}_{1}^{H}\mathbf{V}_{2}$ is an orthogonal matrix, it can be shown that 
	\vspace{-3mm}
	\begin{equation}\label{eq:trace}
	\Tr(\mHt \mHt^{H}) \leq \sum_{i=1}^{K} \psi_{1,i}^{2}   \frac{\psi_{2,i}^{2}}{1 + \mNF\psi_{2,i}^{2}}
	\end{equation}
	and that this upper bound is satisfied in two cases. The first one is when $\mathbf{U}_{1}^{H}V_{2}$ is the identity matrix. The second case is when $\psi_{1,i}$ are all equal and $\psi_{2,i}$ are equal, i.e. $ \mHI $ and $ \mHII $ have orthogonal columns. Finally, using (\ref{eq:trace}) and Theorem \ref{theorem:ubound1} we can establish that
	\begin{equation}
	C \leq K\log_2 \left( 1+\frac{\mAlpha{\mSymbTransmitter}^2  \mAlpha{\mSymbUAV}^2 }{\mnvar{}K} \sum_{i=1}^{K} \psi_{1,i}^{2}   \frac{\psi_{2,i}^{2}}{1 + \mNF\psi_{2,i}^{2}} \right)
	\end{equation}
	and the bound is achieved when $\mHI $ and $ \mHII $ both have orthogonal columns.
\end{proof}

In the far field region, where the UAVs are at a large distance from the transmitter and the receiver compared to the relative size of the swarm, the magnitudes of channel coefficients in $\mHI $ and $ \mHII $ depend on $ \mDistI $ only. We can rewrite as $ \mHI \approx \frac{\lambda}{4\pi \mDistI} \mHbI $ and  $ \mHII \approx \frac{\lambda}{4\pi (\mDist - \mDistI)} \mHbII $, where $ |\left[\mHbI\right]_{i,j}| = |\left[\mHbII\right]_{i,j}| = 1$.  When $\mHI $ and $ \mHII $ both have orthogonal columns, i.e, $ \mHI^{H}\mHI = \mNt \mNu \mathbf{I}$ and $ \mHII^{H}\mHII = \mNu \mNr \mathbf{I}$, the capacity then becomes 
\begin{equation} \label{eq:capSwarm}
C \leq  K \log_2(1 +    \frac{\mAlpha{\mSymbTransmitter}^2 \mAlpha{\mSymbUAV}^2   \lambda^4   \phi_1^2 \phi_2^2     }{ \mnvar{} \mDistI^2 (\mNF \lambda^2 \mAlpha{\mSymbUAV}^2 (4\pi)^2    +  \mlHII^2  (4 \pi)^4  (\mDist-\mDistI)^2)} )   
\end{equation}
where $\phi_1 = \max(\mNt,\mNu)$ and $\phi_2=\max(\mNu,\mNr)$. The optimization becomes similar to the single UAV relay case given in (\ref{eq:single_uav}). Note that this results holds only in the far field and the power  received by each UAV and the destination antennas is almost equal. For small values of $\mDistI$ or values of $\mDistI$ close to $\mDist$, this relation is a loose upper bound.
\section{Proposed Method} \label{sec:method}
In this section, we propose a method to find positions that optimize the channel capacity for a UAV relay swarm. 
For a single UAV, to find the optimal value of $\mDistI$, we only need to evaluate the capacity at the roots derived in  (\ref{eq:roots_max_power}) and (\ref{eq:roots_const_gain}) within their respective regions and at the end points of the regions and then take the maximum point. 

\newcommand{\mSpacingTx}{ \ensuremath{d_{\mSymbTransmitter}} }
\newcommand{\mSpacingRx}{ \ensuremath{d_{\mSymbReceiver}} }
\newcommand{\mSpacingUAV}{ \ensuremath{d_{\mSymbUAV}} }
\newcommand{\mIntI}{ \ensuremath{m} } 
\newcommand{\mIntII}{ \ensuremath{n} } 
\newcommand{\mIndxDim}{ \ensuremath{i} } 

For the UAV swarm, the problem is more challenging since changing the position of one UAV affects both channel matrices.  Prior works have considered the problem of antenna array design for LoS MIMO between one transmitter and one receiver. Unfortunately, the geometry of the optimal receiver array depends on the geometry of the transmitter array. For this work, we assume that both the source and destination antennas in our problem consist of uniform rectangular arrays (URAs). The transmit antennas consists of $\mNt=\mNt^{(0)} \times \mNt^{(1)}$ antennas, where $\mNt^{(\mIndxDim)}$  where $\mIndxDim \in {0,1}$ corresponds to the dimension, having spacing $\mSpacingTx^{(\mIndxDim)}$. Similarly, the receivers consists of  $\mNr=\mNr^{(0)} \times \mNr^{(1)}$ and spacing  $\mSpacingRx^{(\mIndxDim)}$. 

If we only consider the first hop, the optimal capacity over this link occurs when the channel $\mHI$ is orthogonal. This can be achieved when the UAVs are placed in a parallel URA  having spacing \cite{bohagen_ura_2007}
\vspace{-1.5mm}
\begin{equation} \label{eq:orth1}
	\mSpacingTx^{(\mIndxDim)} \mSpacingUAV^{(\mIndxDim)} = \lambda \mDistI \left( \mIntI_\mIndxDim + 1 / \max{(\mNt^{(\mIndxDim)},\mNu^{(\mIndxDim)})}\right)
	\vspace{-1.5mm}
\end{equation}

for some integer $\mIntI_\mIndxDim$ under the condition that the distance $\mDistI$ is much larger than the dimension of the antennas, i.e, in the far field.
Similarly, for the second hop for some  integer $\mIntII_\mIndxDim$
\vspace{-1.5mm}
\begin{equation} \label{eq:orth2}
 \mSpacingUAV^{(\mIndxDim)} \mSpacingRx^{(\mIndxDim)} = \lambda (\mDist - \mDistI) \left(\mIntII_\mIndxDim + 1/\max{(\mNu^{(\mIndxDim)},\mNr^{(\mIndxDim)})}\right).
 \vspace{-1.5mm}
\end{equation}
If we find a value of $\mSpacingUAV$ for a given $\mDistI$, which satisfies both relations at the far field of both antennas, we can attain the capacity given in (\ref{eq:capSwarm}). An example where this relation applies is when  $\mDistI=\mDist/2$ and both the transmit and receive antennas have identical shapes ($\mNt^{(\mIndxDim)}=\mNr^{(\mIndxDim)}$, $\mSpacingTx^{(\mIndxDim)}=\mSpacingRx^{(\mIndxDim)}$), then both equations (\ref{eq:orth1}) and (\ref{eq:orth2}) can be trivially realized.  Unfortunately, such a value of  $\mSpacingUAV$ does not exist for all $\mDistI$. Additionally, the value of $\mDistI$ that maximizes the capacity given by  ($\ref{eq:capSwarm}$) depends on the value of gains and noise figure similar to the single UAV case. To address this we propose the following  simple search algorithm. 

We propose a two-step approach that first optimizes over $\mDistI$ and then finds spacing $\mSpacingUAV^{(\mIndxDim)}$ that maximizes the capacity. To find the optimal $\mDistI$, we use the approach for a single UAV placement optimization. Then, to a find UAV spacing, we use the fact that the capacity is optimized by improving both links. Therefore, we search the values of spacings between the optimal $\mSpacingUAV^{(\mIndxDim)}$ values of each link. This is done as follows: we use the relation (\ref{eq:orth1}) to calculate $\mSpacingUAV^{1(\mIndxDim)}$  and (\ref{eq:orth2}) to calculate $\mSpacingUAV^{2(\mIndxDim)}$. Then, for each of the dimensions, we scan the spacing of the UAVs between  $\mSpacingUAV^{2(\mIndxDim)}$ and  $\mSpacingUAV^{1(\mIndxDim)}$ and evaluate the capacity to find the spacing that results in a highest capacity.
 \vspace{-1.5mm}
\section{Simulations} \label{sec:simulations}
In this section, we aim to demonstrate the benefits of UAV placement optimization on the MIMO capacity and evaluate our proposed approach. The baseline scenario considered in the simulations assumes that the transmitter and receiver are $1$ km apart. They operate at carrier frequency of $5$ GHz over a narrowband channel. The transmitter power is set to $ \mPt = 12$ dBm, while the total power of the UAV swarm is $ \mPMax{\mSymbUAV} = 0 $ dBm and $ \mAlphaMax{\mSymbUAV}=45$ dB. The noise power corresponds to thermal noise with -174dBm/Hz over a bandwidth of 1 MHz. The UAVs are placed such that the lowest UAV has a height of 30 meters relative to the base of the transmit and receive antennas\footnote{As the height increases to values that are significant with respect to $\mDistI$, the assumption of having equal received power at the UAVs gets violated, which makes it not possible to achieve the upper bound.}.
\begin{figure}[t!]
	\centering
	\includegraphics{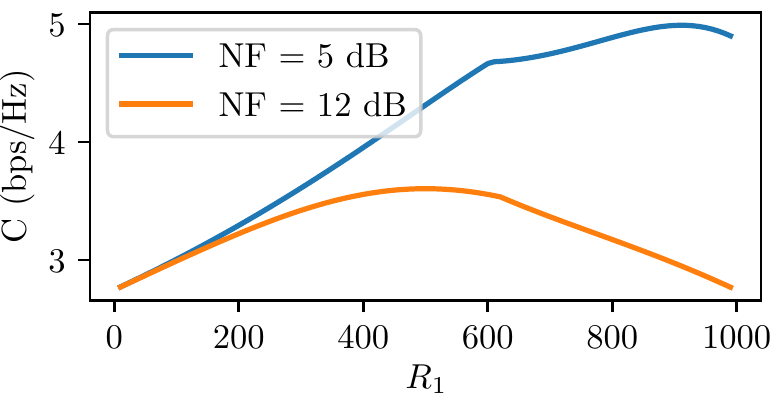}
	\vspace{-4mm}
	\caption{The capacity dependence on relay to transmitter distance for the single UAV case under maximum power constraint.}
	\label{fig:max_power}
\end{figure}

   \begin{figure}[t!]
	\centering
	\includegraphics{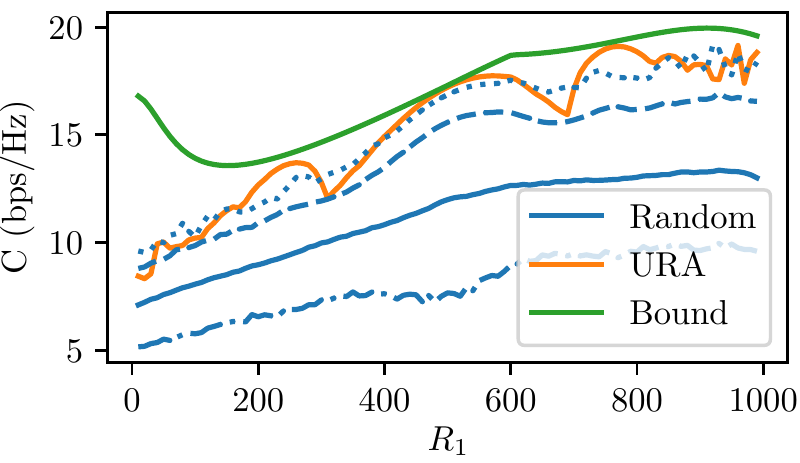}
	\vspace{-4mm}
	\caption{The  capacity obtained for 4 UAVs with NF=5dB. The green line is the theoretical upper bound. Dashed green line corresponds to near field. The orange line is the maximum capacity obtained using URA placement. The solid blue line shows the mean value of the achieved capacity using random placement. The dash-dotted and the dashed blue line represent the $5^{th}$ and $95^{th}$ percentile  capacities respectively. Dotted blue curve represents the maximum capacity obtained via random placement. }
	\label{fig:swarm_rand2}
	\vspace{-4mm}
\end{figure}
   \begin{figure}[t!]
	\centering
	\includegraphics{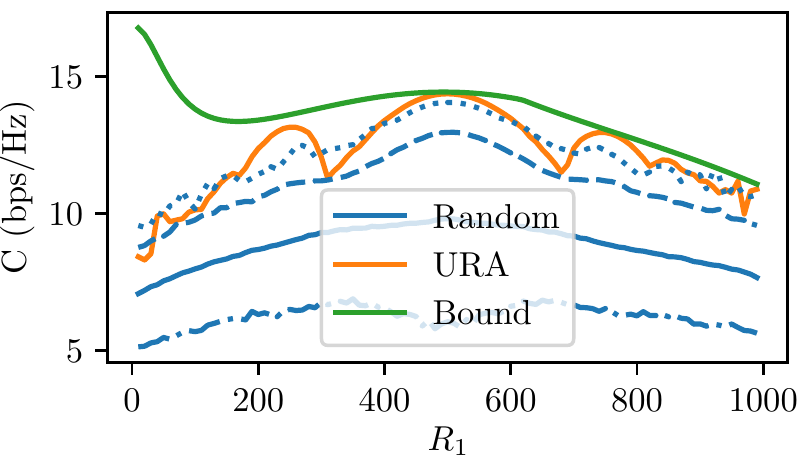}
	\vspace{-4mm}
	\caption{The  capacity obtained for 4 UAVs with NF=12dB. The color and line mapping is the same as in Figure. \ref{fig:swarm_rand2}}
	\label{fig:swarm_rand1}
	\vspace{-5mm}
\end{figure}


  \begin{figure}[t!]
 	\centering
 	\subfloat[Capacity for different UAV spacing. \label{fig:swarm_4_cap}]{\includegraphics[scale = 0.95]{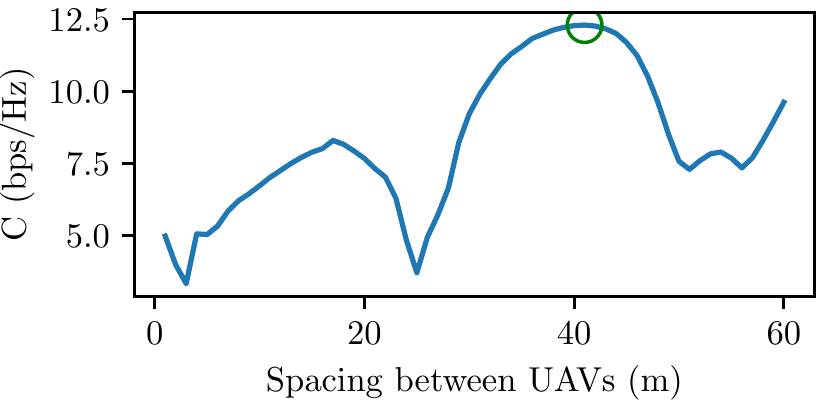}} \vspace{-2mm}
 	
 		\subfloat[Inverse condition number for different  UAV spacing. \label{fig:swarm_4_icn}]{\includegraphics{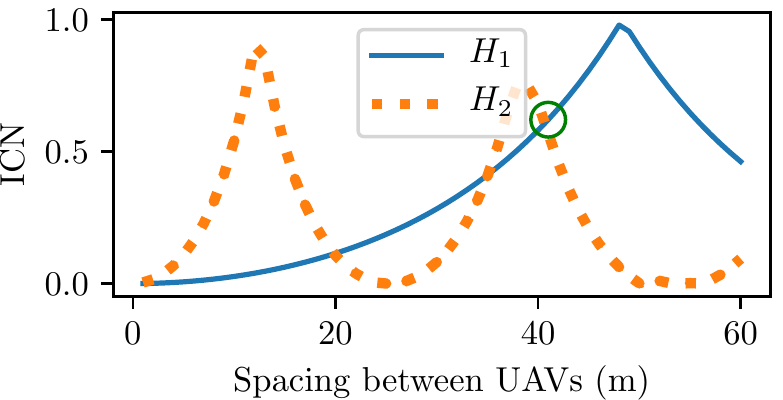}}
 	\caption{Capacity and ICN as  we change the spacing of $4\time 4$ UAV placed at $ \mDistI = 800$m. The optimal capacity and the corresponding ICN are highlighted with a green circle. }
 	\label{fig:swarm_4}
 	\vspace{-5mm}
 \end{figure}

We start by considering the single UAV relay scenario, where $\mNt=\mNr=\mNu=1$.
Fig. \ref{fig:max_power} shows the capacity of a UAV as it moves between $\mDistI=10$m and $\mDistI=990$m. In that case, we can identify two regions. In the first region, when the UAV is closer to the transmitter, the UAV needs to use a value of gain lower than the maximum gain in order to avoid exceeding the maximum power. As it moves further, it increases  $ \mAlpha{\mSymbUAV} $, thus compensating for the decay of $ \mhI $. In that case the maxima, if any, would be given by Eq.(\ref{eq:roots_max_power}). In the second region, the UAV reaches the maximum possible gain and it cannot further increase its amplification. In that case, the maxima, if exists within the region where the relation applies, would  be given by Eq. (\ref{eq:roots_const_gain}). 



For the UAV swarm relay, we consider the case where both the transmitter and receiver are $2\times2$ uniform rectangular arrays (URA) with 50 cm separation between antennas and a swarm consisting of 4 UAVs between the transmitter and receiver. Figures \ref{fig:swarm_rand2} and \ref{fig:swarm_rand1} show the capacities for different positions of the swarm and placements of UAVs within the swarm, assuming noise figures (NF) of 5 dB and 12 dB, respectively. In both figures, we show the upper bound given by (\ref{eq:capSwarm}), in green color. It is interesting to note that capacities for single UAV relay and UAV swarm MIMO relay follow the same trend in the far-field regions. We also note that for small $\mDistI$, the swarm is in near-field where assumptions used for deriving capacity bound (\ref{eq:capSwarm}) do not hold. 

Next, we evaluate whether random placement of UAVs, within a square area of 80m width, can achieve the capacity upper bound. Results in Figures \ref{fig:swarm_rand2} and \ref{fig:swarm_rand1} show maximum, average, $5^{th}$, and $95^{th}$ percentile over 5000 different placements at each $\mDistI$.
These results show that there exists random placement that achieves the upper bound in the region where the UAVs are in the far-field of both antennas. For small values of $\mDistI$, when swarm is close to the transmitter, the gap between the upper bound and the obtained values is large. On the other hand, for large $\mDistI$, the noise covariance in (\ref{eq:capacity_general}) becomes $\mnVar{} \approx  (\mHII \mHII^H)^{-1}$ and the effect of $\mHII$ cancels out, while conditions for far-field apply to $\mHI$. As a result, in the region where swarm is closer to the receiver, the gap between achievable capacity and upper bound is smaller.  

We also evaluate whether proposed URA placement of UAVs can achieve the capacity upper bound. The results in Figures   \ref{fig:swarm_rand2} and \ref{fig:swarm_rand1} show that there exists optimal URA placement that attains the upper bound around mid point, $\mDistI=\mDist/2$, as well as several values of $\mDistI$, where both channels can be made orthogonal. Although the proposed UAV placement based on URA geometry does not always attain the upper bound, it can always achieve a capacity better than 95\% of random placements.

In Fig. \ref{fig:swarm_4_cap} and \ref{fig:swarm_4_icn} we investigate the orthogonality conditions of $\mHI$ and $\mHII$ for different URA spacings. Fig.  \ref{fig:swarm_4_cap} shows the capacity and Fig. \ref{fig:swarm_4_icn} shows the inverse condition number (ICN), ratio between the smallest and largest singular values, for both the matrices $\mHI$ and $\mHII$. The ICN is equal to 1 when a  matrix has orthogonal columns. From Fig. \ref{fig:swarm_4_icn}, we can see that $\mHII$ becomes  orthogonal for two values of $\mSpacingUAV$ within that range (corresponding to two different values of $\mIntII$ in (\ref{eq:orth2})) and $\mHI$ becomes  orthogonal at only one value. By sweeping the spacing in the range of $\mSpacingUAV$'s that orthogonalize each channel, the algorithm can find the spacing that maximizes capacity.

\section{Conclusion and Future Work} \label{sec:conclusion}

In this paper, we studied the capacity improvements of an obstructed MIMO link enabled by a UAV swarm acting as an amplify-and-forward MIMO relay under a line-of-sight channel. Our analysis revealed that significant gains can be achieved by optimizing the placement of the position of the entire swarm as well as the UAVs within the swarm.  We derived an upper bound of the capacity for UAV swarm. An algorithm that approaches this upper bound is proposed for the URA transmitter and receiver cases.  Further research is needed to find a tighter upper bound for a planar UAV swarm placement and algorithms to achieve it  when the UAVs are closer to the transmitter or receiver. Additionally, joint optimization of power gains and locations is a promising approach to further improve capacity.


\bibliography{references}
\bibliographystyle{ieeetr}

\end{document}